\newtheorem{theorem}{Theorem}
\newtheorem{lemma}{Lemma}
\theoremstyle{definition}
\theoremstyle{definition}
\newtheorem{remark}{Remark}
\theoremstyle{definition}
\definecolor{DarkGreen}{rgb}{0.1,0.5,0.1}
\definecolor{DarkRed}{rgb}{0.5,0.1,0.1}
\definecolor{DarkBlue}{rgb}{0.1,0.1,0.5}
\definecolor{DarkPurple}{rgb}{0.5,0.2,0.5}
\definecolor{DarkTurquoise}{rgb}{0.1,0.5,0.5}
\newcolumntype{P}[1]{>{\centering\arraybackslash}p{#1}}
\newcommand{\A}{\mathcal{A}}
\DeclareMathOperator{\supp}{supp}
\def\BibTeX{{\rm B\kern-.05em{\sc i\kern-.025em b}\kern-.08em
    T\kern-.1667em\lower.7ex\hbox{E}\kern-.125emX}}
\begin{document}

\title{Root of Unity for Secure Distributed Matrix Multiplication: Grid Partition Case\\
% {\footnotesize \textsuperscript{*}Note: Sub-titles are not captured in Xplore and
% should not be used}
% \thanks{Identify applicable funding agency here. If none, delete this.}
}

\author{\IEEEauthorblockN{Roberto Assis Machado, Felice Manganiello, \IEEEmembership{Senior~Member, IEEE}} \IEEEauthorblockA{School of Mathematical and Statistical Sciences\\ Clemson University\\Clemson, USA}}

\maketitle

\begin{abstract}
We consider the problem of secure distributed matrix multiplication (SDMM), where a user has two matrices and wishes to compute their product with the help of $N$ honest but curious servers under the security constraint that any information about either $A$ or $B$ is not leaked to any server. This paper presents a \emph{new scheme} that considers a grid product partition for matrices $A$ and $B$, which achieves an upload cost significantly lower than the existing results in the literature. Since the grid partition is a general partition that incorporates the inner and outer ones, it turns out that the communication load of the proposed scheme matches the best-known protocols for those extreme cases.
\end{abstract}
\begin{IEEEkeywords}
security, distributed computation, coding theory
\end{IEEEkeywords}
\vspace{-10pt}
% \textit{An extended version of this paper, including all proofs, is accessible at: \emph{}}

\section{Introduction}
%!TEX root = main.tex

The core and one of the most expensive operations in many machine learning applications is matrix multiplication. Performing such operations locally on a single computer takes a long time. Users would consider outsourcing their matrices to a distributed system for time-sensitive applications to carry out demanding computation tasks. Efficient methods require coding over the input matrices to speed up the computational time, yielding a trade-off among the number of workers needed, tasks performed at each worker, and the total amount of data transmitted. While outsourcing disrupts the computational delays, it leads to data security concerns. This paper aims to develop an efficient, secure distributed matrix multiplication (SDMM) scheme which keeps matrices secure from the potentially colluding servers.

 We consider the problem of secure distributed matrix multiplication (SDMM), where a user has two matrices, $A \in \mathbb{F}_q^{a \times b}$, $B \in \mathbb{F}_q^{b \times c}$ and wishes to compute their product, $AB \in \mathbb{F}_q^{a \times c}$, with the assistance of $N$ servers, without leaking any information about either $A$ or $B$ to any server. We assume that all servers are honest but curious (passive) in that they are not malicious and will follow the pre-agreed upon protocol. However, any $T$ of them may collude to eavesdrop and extrapolate information regarding $A$ or $B$.

The setting considered in this paper is proposed in~\cite{ravi2018mmult}, with many follow-up works~\cite{Kakar2019OnTC,koreans,d2019gasp,DOliveira2019DegreeTF, Aliasgari2019DistributedAP,aliasgari2020private,Kakar2019UplinkDownlinkTI,doliveira2020notes,Yu2020EntangledPC,mital2020secure,bitar2021adaptive, hasircioglu2021speeding,ftp-9606447}. The initial performance metric used was the download cost, meaning the total amount of data downloaded by the users from the server. Following papers have also considered the upload costs \cite{mital2020secure}, the total communication costs \cite{9004505, ftp-9606447}, and computational costs \cite{doliveira2020notes}.

Different partitions of the matrices lead to different trade-offs between upload and download costs. In this paper, we consider the most general one, namely, the grid product partition given by  \[A = \begin{bmatrix}
A_{i,j} 
\end{bmatrix}_{\substack{1\leq i\leq t\\ 1\leq j\leq s}}, 
B = \begin{bmatrix}
B_{i,j} 
\end{bmatrix}_{\substack{1\leq i\leq s\\ 1\leq j\leq d}}\]
such that, 
\[\displaystyle AB = 
\begin{bmatrix}
M_{1,1} & \cdots & M_{1,d}\\
\vdots & \ddots & \vdots\\
M_{t,1} & \cdots & M_{t,d}\\\end{bmatrix}=\begin{bmatrix}
\sum_{\ell=1}^s A_{i,\ell} B_{\ell,j}\end{bmatrix}_{\substack{1\leq i\leq t\\ 1\leq j\leq d}},\]
where the products $A_{i,\ell} B_{\ell, j}$ are well-defined and of the same size. Under this partition, a polynomial code is a polynomial $h(x)=f_A(x) \cdot f_B(x)$, whose coefficients encode the submatrices $A_{i, j}B_{k,\ell}$. The next step is where the scheme we propose
differs from previous works that use the grid product partition. The user evaluates polynomials $f_A(x)$ (encoding matrix $A$) and $f_B(x)$ (encoding matrix $B$) at powers $\alpha_N, \alpha^2_N, \ldots, \alpha^N_N=1$ of an $N$-th root of unity $\alpha_N$. The $N$ servers compute the product $h(\alpha^i)=f_A(\alpha^i)f_B(\alpha^i)$ for $i\in\{1,\ldots,N\}$.  The polynomial $h(x)$ is constructed so that no $T$-subset of evaluations reveals any information about $A$ or $B$ ($T$-security), but so that the user can reconstruct $AB$ given all $N$ evaluations (decodability).

An example of a polynomial scheme for the grid product partition is the secure MatDot codes in \cite{Aliasgari2019DistributedAP} and the entangled polynomial codes in~\cite{8949560}. 

In Theorem \ref{theo:scheme}, we characterize the total communication rate
achieved by our proposed scheme:

\begin{theorem}\label{theo:scheme}
Let $t,d,s$ and $T$ be positive integers. Let $A \in \mathbb{F}_q^{a \times b}$, $B \in \mathbb{F}_q^{b \times c}$ be two matrices.
Then, the proposed scheme with partition parameters $(t,d,s)$ and security parameter $T$ securely computes $AB$ with the assistance of  
\[
    N= 
\begin{cases}
    (d+1)(t+T)-1,              &\text{if }s=1 \\
    dst+dT+ts-1+T+1,& \text{if } s> 1
\end{cases}
\]

servers and a total communication rate of
\begin{align} 
\mathcal{R} = \left( N\left(\frac{b}{cts}+\frac{b}{asd}+ \frac{1}{td}\right) \right)^{-1} . 
\end{align}%\rob{needs to check and add case s=1. Also, if this is a great parameter to compare with other ones}
\end{theorem}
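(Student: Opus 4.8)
The plan is to establish the three ingredients that certify a valid SDMM scheme—an explicit code construction, decodability, and $T$-security—and then read off $\mathcal{R}$ from the data sizes. First I would fix the encoding polynomials. For the data part I would use entangled-polynomial-style exponents: attach to each block $A_{i,\ell}$ a monomial $x^{\alpha(i,\ell)}$ and to each $B_{\ell,j}$ a monomial $x^{\beta(\ell,j)}$ chosen so that $\alpha(i,\ell)+\beta(\ell,j)$ is independent of $\ell$ (for instance $\alpha(i,\ell)=\ell+si$ and $\beta(\ell,j)=(s-1-\ell)+stj$), so that the contraction $\sum_{\ell}A_{i,\ell}B_{\ell,j}$ is carried out automatically inside a single coefficient of $h(x)=f_A(x)f_B(x)$. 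To each of $f_A$ and $f_B$ I would then append $T$ random matrices on $T$ further monomials whose exponents lie beyond the data exponents. The user evaluates $f_A,f_B$ at $\alpha_N,\alpha_N^2,\dots,\alpha_N^N=1$, and server $i$ returns $h(\alpha_N^i)$.

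For decodability I would show that each target block $M_{i,j}$ appears as the coefficient of a distinct power $x^{E(i,j)}$ of $h$, while every cross term—arising either from a misaligned pair $\ell\neq\ell'$ or from a random monomial—lands on a different exponent and so cannot corrupt a target coefficient. Since the $N$ points $\alpha_N^i$ are the distinct $N$-th roots of unity and the construction is arranged so that $\deg h=N-1$, the evaluation map is an invertible inverse-DFT Vandermonde system, and interpolation recovers every coefficient of $h$, in particular all $M_{i,j}$; the user thus reconstructs $AB$. The value of $N$ is exactly $\deg h+1$, and tracking the largest exponent produced by the data$\times$data, data$\times$random, and random$\times$random products yields the two stated expressions, the $s=1$ branch saving one server because the inner summation is trivial and one layer of exponents collapses.

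For $T$-security I would argue that for any set of $T$ indices $\{i_1,\dots,i_T\}$ the vector of evaluations seen by the colluding servers equals a fixed data-dependent term plus the image of the $T$ random matrices under the matrix $\bigl(\alpha_N^{i_k e_m}\bigr)_{k,m}$, where the $e_m$ are the random-term exponents. Choosing these exponents consecutive, with roots of unity this matrix factors as a diagonal scaling times a genuine Vandermonde matrix in the distinct points $\alpha_N^{i_k}$, hence is invertible over $\mathbb{F}_q$; the random matrices therefore act as a one-time pad and the joint distribution of the $T$ evaluations is independent of $(A,B)$. The rate then follows by accounting: the upload is $N$ copies of one $A$-block and one $B$-block, of sizes $\tfrac{ab}{ts}$ and $\tfrac{bc}{sd}$, and the download is $N$ copies of one $M$-block of size $\tfrac{ac}{td}$; dividing the useful output $ac$ by $N\bigl(\tfrac{ab}{ts}+\tfrac{bc}{sd}+\tfrac{ac}{td}\bigr)$ and cancelling $ac$ gives $\mathcal{R}=\bigl(N(\tfrac{b}{cts}+\tfrac{b}{asd}+\tfrac{1}{td})\bigr)^{-1}$.

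The main obstacle I anticipate is the simultaneous placement of the random-term exponents: they must be pushed far enough that no random$\times$data or random$\times$random product aliases onto a target exponent $E(i,j)$ (preserving decodability) and must still form a full-rank Vandermonde block at every $T$-subset of roots of unity (preserving security), all while keeping $\deg h$ as small as the claimed $N-1$ so that the server count is not inflated. The cyclic nature of the roots of unity is a secondary subtlety—because $\alpha_N^N=1$, large exponents could wrap around—but this is rendered harmless precisely by $\deg h=N-1<N$, so no two monomials of $h$ are identified modulo $N$.
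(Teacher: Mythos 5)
Your overall architecture coincides with the paper's: aligned (``entangled'') exponents so that each block $M_{i,j}$ sits on a single coefficient of $h$, $T$ appended random matrices per side, evaluation at the $N$-th roots of unity, a Vandermonde argument for $T$-security (the paper's Lemma~\ref{lem:tsecure} does exactly this, using consecutive random exponents), and the same upload/download accounting for $\mathcal{R}$. The fatal divergence is the decoder. You insist on full interpolation: $\deg h = N-1$ and ``no two monomials of $h$ identified modulo $N$.'' The paper's decoder (Lemma~\ref{lem:decodability}) demands strictly less: using the orthogonality relation in Equation~\eqref{prop:rootofunity}, it extracts, for each target exponent, the \emph{sum of all coefficients congruent to it mod $N$}; hence it only needs that no garbage exponent be congruent to a \emph{target} exponent, while garbage terms may freely alias with one another mod $N$. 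This relaxation is not cosmetic --- it is exactly what makes the claimed $N$ reachable. With the paper's polynomials, $\supp(h)$ is a contiguous block of $(d+1)(t+T)+T-1 = N+T$ integers when $s=1$ (and $N+T-1$ integers when $s>1$), so your no-aliasing condition is violated for every $T\ge 1$ (resp.\ every $T\ge 2$): reduction mod $N$ necessarily identifies some monomials, and the paper's proof shows these identifications are garbage-on-garbage, which its decoder tolerates and yours cannot.

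Nor can you repair this by re-placing the randomness ``beyond the data exponents,'' the obstacle you flag but do not resolve; within your framework it is not resolvable at the claimed $N$. Concretely, take $t=d=s=2$, $T=1$ with your exponents $\alpha(i,\ell)=\ell+si$, $\beta(\ell,j)=(s-1-\ell)+stj$: the data exponents are $\{3,4,5,6\}$ for $f_A$ and $\{3,4,7,8\}$ for $f_B$, and the target exponents are $\{7,9,11,13\}$. Putting $R$ on the next free exponent $7$ makes $RB_{1,1}$ land on $7+4=11$, the target carrying $M_{1,2}$, so decodability fails. Pushing the randomness clear of all targets forces the random exponents up to at least $11$ (or down below $-1$, symmetrically), giving e.g.\ $\supp(h)=\{6,\dots,19\}\cup\{22\}$; these $15$ exponents are pairwise distinct mod $N$ only when $N\ge 17$, which exceeds both the theorem's value $N=15$ and the $N=13$ achieved in the paper's Section~\ref{sec2} example (an $N$ that requires two garbage-on-garbage collisions mod $13$). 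The trade-off is structural: randomness placed far enough out to protect the targets inflates the span of $\supp(h)$ past $N$, so ``$\deg h=N-1$,'' target-cleanliness, and $T$-security cannot coexist at the claimed server count. As written, your argument establishes only a weaker theorem with strictly larger $N$; the missing idea is precisely the paper's aliasing-tolerant coefficient extraction via the DFT identity.
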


In Theorem~\ref{theo:comparison}, we show that our proposed code matches the recovery threshold of the best-known scheme for inner product partition, \cite{mital2020secure} and also matches GASP codes for outer product partition when $T<t$.

\begin{theorem}\label{theo:comparison}
Let $(t,d,s)$ be the partition parameters and $T$ be the security parameter:

\begin{itemize}
    \item If $t=d=1$, meaning inner product partition, then the recovery rate for the proposed scheme matches s+2T, same as scheme in \cite{mital2020secure} without pre-computation.
    \item If $s=1$, meaning outer product partition, then the recovery rate for the proposed scheme matches $(d+1)(t+T)-1$, same as for GASP codes, in \cite{9004505}, $T<t$.
\end{itemize}
\end{theorem}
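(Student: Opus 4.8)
The plan is to treat Theorem~\ref{theo:comparison} as a direct specialization of the recovery-threshold formula established in Theorem~\ref{theo:scheme}, combined with a comparison against the thresholds reported in \cite{mital2020secure} and \cite{9004505}. Concretely, for each of the two boundary regimes I would (i) substitute the prescribed partition parameters into the expression for $N$, (ii) simplify to a closed form, and (iii) verify that this closed form coincides, after translating notation, with the recovery threshold claimed for the corresponding scheme in the literature. Since the number of servers $N$ is exactly the recovery threshold of the proposed code, no separate decodability argument is needed at this stage: it is already guaranteed by Theorem~\ref{theo:scheme}, and the statement here is purely a matching of thresholds.

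For the inner product regime I would set $t=d=1$. As this regime is nondegenerate only when $s>1$, I would use the $s>1$ branch of the piecewise formula for $N$, substitute $t=d=1$, and collect terms. The target is the stated expression in $s$ and $T$; a careful bookkeeping of the degree of $h(x)=f_A(x)f_B(x)$ under the root-of-unity evaluation is what must reproduce the claimed constant, and I would double-check that any reduction relative to the naive substitution is accounted for by the wraparound $\alpha_N^N=1$ rather than being lost. The comparison step then requires recalling the recovery threshold of the DFT-based construction of \cite{mital2020secure} in its no-pre-computation form; here the only care needed is to identify their inner-partition parameter with our $s$ and to confirm the two expressions agree term by term.

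For the outer product regime I would set $s=1$, which directly selects the first branch of the piecewise definition and yields $N=(d+1)(t+T)-1$ with no algebra required. The content of the claim is then the comparison with GASP codes of \cite{9004505}: I would map their partition parameters onto our $(t,d)$ and invoke their threshold formula, restricting to the hypothesis $T<t$.

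The main obstacle I anticipate is not the arithmetic but the reconciliation of conventions and regimes across the cited works. GASP in particular is defined by a case analysis in which the achievable threshold depends on how $T$ compares to the partition sizes; the real work is to check that the outer-product instance with $T<t$ falls precisely into the GASP regime whose threshold is $(d+1)(t+T)-1$, rather than a neighboring branch, since this is exactly where the hypothesis $T<t$ is used. A secondary point is to confirm that \emph{without pre-computation} is the correct baseline for \cite{mital2020secure}, so that the comparison is against a scheme operating under the same computational assumptions as ours.
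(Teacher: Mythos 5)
Your outer-product bullet is handled the way the theorem intends: setting $s=1$ selects the first branch of Theorem~\ref{theo:scheme}, giving $N=(d+1)(t+T)-1$ with no algebra, and the only remaining content is the comparison against the GASP case analysis, where, as you correctly observe, the hypothesis $T<t$ is what places the outer-product instance in the regime of \cite{9004505} whose threshold is $(d+1)(t+T)-1$.

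The inner-product bullet has a genuine gap, and it sits exactly where your plan relies on substitution. Plugging $t=d=1$ into the $s>1$ branch gives $dst+dT+ts-1+T+1=2s+2T$, not $s+2T$, and no collecting of terms removes the extra $s$; your method as stated terminates at $2s+2T$ and the claimed match fails. The match with \cite{mital2020secure} cannot be read off from the formula in Theorem~\ref{theo:scheme}; it requires rerunning the evaluation-point analysis of Lemma~\ref{lem:decodability} in the special case $t=d=1$, where it produces a strictly better threshold than the general formula. Concretely, when $t=d=1$ the product $h(x)=f_A(x)f_B(x)$ has a \emph{single} useful coefficient, the constant term $\sum_{\ell=1}^{s}A_{1,\ell}B_{\ell,1}$, and $\supp(h)\subseteq\{-(s+2T-1),\dots,s+T-1\}$. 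The four conditions on the evaluation points therefore collapse to one: no other element of $\supp(h)$ may be congruent to $0$ modulo $N$. Interference terms (cross products $A_{1,k_1}B_{k_2,1}$ with $k_1\neq k_2$ and anything involving $R_k$ or $S_k$) are free to alias with \emph{one another}, because with $td=1$ there are no other target exponents to protect. Since the window $\{-(s+2T-1),\dots,s+T-1\}$ contains no nonzero multiple of $s+2T$, an $(s+2T)$-th root of unity already satisfies everything, which is the claimed threshold. This weakening of the aliasing constraints when $td=1$ is the missing idea; your ``wraparound'' hedge gestures at it but does not supply the argument. As evidence that the printed formula is loose rather than the theorem false: the paper's own example with $t=d=s=2$, $T=1$ uses $N=13$ while the $s>1$ branch yields $15$; the tight count $dst+dT+ts-s+T$ agrees with the example and specializes to $s+2T$ at $t=d=1$ and to $(d+1)(t+T)-1$ at $s=1$.
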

\subsection{Related Work}

For distributed computations, polynomial codes were initially introduced in~\cite{polycodes1} to mitigate stragglers in distributed matrix multiplication. A series of works followed this,~\cite{polycodes2,pulkit,pulkit2,fundamental}.

The literature on SDMM has also studied different variations of the model we focus on here. For instance, in~\cite{nodehi2018limited,jia2019capacity,mital2020secure,akbari2021secure} the encoder and decoder are considered to be separate, in~\cite{nodehi2018limited} servers are allowed to cooperate. In~\cite{kim2019private} the authors consider a hybrid setup between SDMM and private information retrieval where the user has a matrix $A$ and wants to multiply it with a matrix $ B$ belonging to some public list privately.

\subsection{Main Contributions}

Our main contributions are summarized below.

\begin{itemize}
     \item We present a generalization for polynomial coding in the context of the secure distributed matrix multiplication problem, considering the grid product partition. This partition allows extending the use of techniques to reduce upload costs.
    
    \item By adapting the Fourier Discrete Transform used in \cite{mital2020secure}, we present a new scheme for SDMM. It reduces the communication load by lowering the recovery threshold. We show that they are secure, decodable, and present their total communication rate in Theorem~\ref{theo:scheme}.

    \item In Theorem~\ref{theo:comparison}, we show that \emph{the proposed scheme} matches the recovery threshold of the best-known scheme for inner product partition, \cite{mital2020secure}, and also matches the GASP scheme for outer product partition when $T<t$.

\end{itemize}

% \section{Preliminaries} \rob{since preliminaries are small, I'm thinking on incorporatign this on scheme section}

% In this section, we introduce some notation and concepts needed for the rest of the paper. We start with the definition of a $n$-th root of unity.

% \begin{definition}
% An element $\alpha_N\in \mathbb{F}_q$ is an $N$-th primitive root of unity if
% $\alpha_N^{N}=1, \mbox{ and } \alpha_N^{m}\neq 1 \mbox{ for }  0<m<N$.
% \end{definition}

% \begin{proposition}\label{decoding}
% \[
%     \sum_{i=1}^N(\alpha_N^i)^s= 
% \begin{cases}
%     0,              &\text{if }N\nmid  s \\
%     N,& \text{if } N \mid s
% \end{cases}
% .\]
% \end{proposition}

\section{A Motivating Example: $d=t=2$ and $T=1$} 
%\section{A Motivating Example: $L=2$ and $T=0$} \label{sec: motivating example}
\label{sec2}

We begin the description of our proposed scheme with an example, which we present to showcase our scheme. At the end of the section, we assume that each server can compute $\frac{abc}{4}$ scalar operations, meaning additions or multiplications in $\mathbb{F}_q$. Finally, we compare the proposed method with GASP codes that use an outer partition product and the one using an inner partition product.

In this example, a user wishes to multiply two matrices $A \in \mathbb{F}_{q}^{a \times b}$ and $B \in \mathbb{F}_{q}^{b \times c}$ with the assistance of non-colluding helper servers. Consider the following matrices are partitioned as follows
\[A=\begin{bmatrix}
A_{1,1} & A_{1,2}\\A_{2,1} & A_{2,2}
\end{bmatrix}\in \mathbb{F}_q^{a\times b}, \ B=\begin{bmatrix}
B_{1,1} & B_{1,2}\\B_{2,1} & B_{2,2}
\end{bmatrix}\in \mathbb{F}_q^{b\times c}\]
By multiplying the matrices we obtain
\[M=AB=\begin{bmatrix}
A_{1,1}B_{1,1}+A_{1,2}B_{2,1} & A_{1,1}B_{1,2}+A_{1,2}B_{2,2}\\A_{2,1}B_{1,1}+A_{2,2}B_{2,1}&A_{2,1}B_{1,2}+A_{2,2}B_{2,2}.
\end{bmatrix}\]

Since we assume non-colluding servers, i.e, $T=1$, it involves picking two random matrices $R \in \mathbb{F}_{q}^{\frac{a}{2} \times \frac{b}{2}}$ and $S \in \mathbb{F}_{q}^{\frac{b}{2} \times \frac{c}{2}}$.
Consider the (Laurent) polynomials
\begin{align*}
    f_A(x)&=A_{1,1} + A_{1,2}x + A_{2,1}x^2+ A_{2,2}x^3 + R x^4,\\ &\mbox{ and } \\
    f_B(x)&=B_{1,1} + B_{2,1}x^{-1} + B_{1,2}x^{-5}+ B_{2,2}x^{-6} + S x^{-10}
\end{align*}

% \textcolor{red}{Looking at matrix $M$ and at the multiplication of the two polynomials, we need the following conditions to hold
% \[\left\{\begin{array}{l}
%      \alpha_1+\beta_1=\alpha_2+\beta_3 \\
%       \alpha_1+\beta_2=\alpha_2+\beta_4 \\
%       \alpha_3+\beta_1=\alpha_4+\beta_3 \\
%       \alpha_3+\beta_2=\alpha_4+\beta_4 
% \end{array}\right.\]
% which is equivalent to 
% \[\alpha_1-\alpha_2=\alpha_3-\alpha_4=\beta_3-\beta_1=\beta_4-\beta_2.\]
% }

We obtain the following degree table for polynomial $h(x) = f_A(x)f_B(x)$:
\[
\begin{array}{c|ccccc}
+ & 0 & 1 & 2 &3&  4 \\
\hline
 0& {\color{blue}0 }& 1 & {\color{green}2} & 3 & 4 \\
-1 & -1 & {\color{blue}0} & 1 & {\color{green}2} & 3  \\
-5 & {\color{red}-5} & -4 & {\color{orange}-3} & -2& -1  \\
-6 & -6 & {\color{red} -5} & -4 & \color{orange}{-3}  &-2 \\
-10 & -10 & -9 & -8 & -7 & -6 \\\end{array}
\]
leading to a problem to find evaluations points $\mathbb{F}_q$ that minimizes the set
$\{\alpha^{i}: i\in \{-10, -9, \ldots, 4\} \}$ under the following conditions:

\begin{itemize}
    \item $|\{\alpha^{-5}, \alpha^{-3}, \alpha^{0}, \alpha^{2}\}|=4$
    \item $\{\alpha^{-5}, \alpha^{-3}, \alpha^{0}, \alpha^{2}\} \cap \{\alpha^{-10},\ldots, \alpha^{-6}, \alpha^{-4}, \\ \mbox{\hspace{4.5cm}}\alpha^{-2}, \alpha^{-1}, \alpha,  \alpha^{3}, \alpha^{4}\}=\emptyset$
    
\end{itemize}

Consequently, if $\alpha=\alpha_{13}$ is an $13$-th root of unity, such a condition is satisfied.

\subsection{Computational complexity}

Let $\alpha_{13}\in \mathbb{F}_q$ be a primitive root of unity. The algorithm for computing the multiplication is as follows
\begin{enumerate}
    \item \textbf{Encode.} For $i=1,\ldots,13$, the user computes $f_A(\alpha_{13}^i)$ and $f_B(\alpha_{13}^i)$.
    \item \textbf{Upload.} The user sends matrices $f_A(\alpha_{13}^i)$ and $f_B(\alpha_{13}^i)$ to Server $i$. 
    \item \textbf{Server multiplication.} Servers multiply together the received matrices.
    \item \textbf{Download.} Servers send the result $f_A(\alpha_{13}^i)\cdot f_B(\alpha_{13}^i)$ back to the user. 
    \item \textbf{Decode.} The user uses Equation \ref{prop:rootofunity} to obtain the coefficients with degree $-5$, $-3$, $0$, and $2$ or, equivalently, $0$, $2$, $8$, and $10$ since polynomials are evaluated at an 13-th root of unity $\alpha_{13}$. Therefore,
    
    \begin{align*}
        A_{1,1}B_{1,1}+A_{1,2}B_{2,1}&=\frac{1}{13}\sum_{i=1}^{13}f_A(\alpha_{13}^i)\cdot f_B(\alpha_{13}^i)\\ 
        A_{1,1}B_{1,2}+A_{1,2}B_{2,2}&=\frac{1}{13}\sum_{i=1}^{13}(\alpha_{13}^i)^5f_A(\alpha_{31}^i)\cdot f_B(\alpha_{13}^i)\\
        A_{2,1}B_{1,1}+A_{2,2}B_{2,1}&=\frac{1}{13}\sum_{i=1}^{13}(\alpha_{13}^i)^{11}f_A(\alpha_{13}^i)\cdot f_B(\alpha_{13}^i)\\
        A_{2,1}B_{1,2}+A_{2,2}B_{2,2}&=\frac{1}{13}\sum_{i=1}^{13}(\alpha_{13}^i)^3f_A(\alpha_{13}^i)\cdot f_B(\alpha_{13}^i)
    \end{align*}
\end{enumerate}

We start with the assumption that addition and multiplication in $\mathbb{F}_q$ take constant time. We consider for simplicity that parameters $a$, $b$, and $c$ are divisible by $2$. We describe below here the complexities of each step:

\begin{enumerate}
    \item Computing $f_A(\alpha_{13}^i)$ and $f_B(\alpha_{13}^i)$ requires $2ab$ and $2bc$  $\mathbb{F}_q$-operations, respectively. This translates to $26(ab+bc)$ $\mathbb{F}_q$-operations to compute the $13$ evaluations.
    \item The user sends $\frac{13}{4}(ab+bc)$ $\mathbb{F}_q$-elements to the servers.
    \item The computational cost to perform the product $f_A(\alpha_{13}^i)f_B(\alpha_{13}^i)$ on each server is $\frac{ac(b-1)}{4}$.
    \item Each server sends $\frac{ac}4$ $\mathbb{F}_q$-elements to the user.
    \item The decoding step requires up to $\frac{13ac}{2}$ $\mathbb{F}_q$-operations to obtain each coefficient of interest. Since $3$ of those requires exactly $\frac{13ac}{2}$ and one requires $\frac{14ac}{4}$, then in total, we need $23ac$ $\mathbb{F}_q$-operations are required to retrieve the desired product $AB$.
\end{enumerate}

\begin{table*}[ht]
\centering
\begin{tabular}{|P{3cm}||P{3cm}|P{3cm}|P{3cm}|P{3cm}|}
%  \hline
%  \multicolumn{5}{|c|}{Comparison to other methods} \\
%  \hline
\textbf{Scheme} & \textbf{Upload Cost} & \textbf{Download Cost} & \textbf{Encoding Complexity} & \textbf{Decoding Complexity}\\
 \hline
\rule[-2ex]{0pt}{6ex} Proposed Scheme   &  $\frac{13}{4}(ab+bc)$   & $\frac{13 ac}{4}$& $26(ab+bc)$ & $23ac$ \\ \hline
\rule[-2ex]{0pt}{6ex}  GASP& $\frac{7}{2}(ab+bc)$ & $\frac{7ac}{4}$&$28(ab+bc)$&$27ac$\\ \hline
\rule[-2ex]{0pt}{6ex}  Scheme in \cite{mital2020secure} & $\frac{3}{2}(ab+bc)$ & $7ac$& $12(ab+bc)$ & $7ac$\\ \hline
\end{tabular}
\caption{Comparison to other methods with limited $\mathbb{F}_q$-operations   $\frac{ac(b-1)}{4}$ to compute $f_A(\alpha_{11}^i)f_B(\alpha_{11}^i)$ on each server.}{\label{tab:compare}}
\end{table*}
\begin{remark}
    If we consider the time to transmit one, add or multiply two elements in $\mathbb{F}_q$ is equal to $1$, the scheme presented can speedup computational time of multiplying matrices $A \in \mathbb{F}_{q}^{a \times b}$ and $B \in \mathbb{F}_{q}^{b \times c}$ if the dimensions of the matrices satisfy $a > \frac{234}{7}$, $b > \frac{216 a}{-234 + 7 a}$, and $c > \frac{234 a b}{-216 a - 234 b + 7 a b}$ compared to local computation which requires $2abc-ac$ operations using the traditional matrix multiplication.

%a>234/7, b>(216 a)/(7 a - 234), c>(234 a b)/(7 a b - 216 a - 234 b)
\end{remark}
%The overall computation cost is $12(ab+bc)+\frac{abc}{16}+\frac{15}4ac$ and the communication cost is $2(ab+bc+ac)$. 

In Table \ref{tab:compare}, we present a comparative summary for this example among the proposed method, GASP (which uses outer product partition), and the scheme shown in \cite{mital2020secure} (which inner product partition). For this comparison, we fixed the amount of $\mathbb{F}_q$-operations in each server by $\frac{ac(b-1)}{4}$; therefore, we shall assume parameters $a$, $b$ and $c$ are divisible by $4$.

%The computation cost of GASP is $4(ab+bc)+\frac{abc}4+\frac{7}2ac$
\begin{remark}\label{rem:fieldsize}
Since the evaluation points are powers of an $N$-th primitive root of unity, the appropriate size $q$ of the field should satisfy $N \mid (q - 1)$.  This condition ensures the existence of the multiplicative inverse of $N$ in $\mathbb{F}_q$ so that decodability is guaranteed.
\end{remark}
\section{Proposed Scheme}

This section is devoted to presenting the general construction of the proposed scheme. We perform the same technique as in Section~\ref{sec2} retrieving the $dt$ matrices $\sum_{\ell=1}^sA_{i,\ell} B_{\ell,j}$ from the polynomial $h(x) = f_A(x) \cdot f_B(x)$. 

% Consider the following matrices $A \in \mathbb{F}_q^{a \times b}$, $B \in \mathbb{F}_q^{b \times c}$ partitioned as
% $A = \begin{bmatrix}
% A_{i,j} 
% \end{bmatrix}_{\substack{1\leq i\leq t\\ 1\leq j\leq s}}$
% $B = \begin{bmatrix}
% B_{i,j} 
% \end{bmatrix}_{\substack{1\leq i\leq s\\ 1\leq j\leq d}}$, such that  \[\displaystyle AB =\begin{bmatrix}
% \sum_{\ell=1}^s A_{i,\ell} B_{\ell,j}\end{bmatrix}_{\substack{1\leq i\leq t\\ 1\leq j\leq d}}\]

% Consider polynomials \begin{align*}f_A(x) = & \sum_{i=1}^t\sum_{j=1}^s A_{i,j}x^{(i-1)s+j-1} + \sum_{k=1}^{T} R_kx^{ts+k-1},\\ f_B(x) = & \sum_{i=1}^s\sum_{j=1}^d B_{i,j}x^{(1-j)(ts+T)+(1-i)} + \sum_{k=1}^{T} S_kx^{(-d)(ts+T)-k+1}.
% \end{align*}

\bigskip

\noindent \textbf{Choosing the Polynomials:} As described in the introduction, the user partitions the matrices $A \in \mathbb{F}_q^{a \times b}$ and $B \in \mathbb{F}_q^{b \times c}$ as $A = \begin{bmatrix}
A_{i,j} 
\end{bmatrix}_{\substack{1\leq i\leq t\\ 1\leq j\leq s}}$
$B = \begin{bmatrix}
B_{i,j} 
\end{bmatrix}_{\substack{1\leq i\leq s\\ 1\leq j\leq d}}$ with the purpose of getting the matrix multiplication expressed as  \[\displaystyle AB =\begin{bmatrix}
\sum_{\ell=1}^s A_{i,\ell} B_{\ell,j}\end{bmatrix}_{\substack{1\leq i\leq t\\ 1\leq j\leq d}},\] where  $A_{i,j}\in\mathbb{F}_q^{\frac{a}{t} \times \frac{b}{s}}$ and $B_{i,j}\in\mathbb{F}_q^{\frac{b}{s}\times \frac{c}{t}}$. To obtain $T$-security $R_1, \ldots, R_T \in \mathbb{F}_q^{\frac{a}{t} \times \frac{b}{s}}$ and $S_1, \ldots, S_T \in \mathbb{F}_q^{\frac{b}{s}\times \frac{c}{t}}$ are chosen independently and uniformly at random. We then define $f_A\in \mathbb{F}_q^{\frac{a}{t} \times \frac{b}{s}}[x,x^{-1}]$ and $f_B\in \mathbb{F}_q^{\frac{b}{s}\times \frac{c}{t}}[x,x^{-1}]$ as the following polynomials

\begin{equation}\label{eq:encfunction}
 f_A(x) =  \sum_{i=1}^t\sum_{j=1}^s A_{i,j}x^{(i-1)s+j-1} + \sum_{k=1}^{T} R_kx^{ts+k-1},
 \end{equation}
\begin{multline*}
   f_B(x) = \sum_{i=1}^s\sum_{j=1}^d B_{i,j}x^{(1-j)(ts+T)+(1-i)}+ \\+\sum_{k=1}^{T} S_kx^{(-d)(ts+T)-k+1}.
\end{multline*}

\noindent \textbf{Choosing the Field and Evaluation Points:}  Let $J\subset \mathbb{Z}$ be a finite set and $p(x)=\sum_{i\in J}M_ix^i\in \mathbb{F}_q^{m_1\times m_2}[x,x^{-1}]$. Define the support set of $p(x)$ to be  \[\supp (p)= \{i\in J: M_i\neq 0\}.\] To choose the evaluation points in $\mathbb{F}_q$, we need to look for the $N$-th primitive root of unity $\alpha_N$ that minimizes the set of exponents
%$\mathcal{S}(i,j,k_1,k_2)=\{(i-1)s + (1-j)(ts+T) + k_1 - k_2: 1\leq i \leq t, 1\leq j\leq d \mbox{ and } 1\leq k_1, k_2\leq s\}$ 
$\supp (f_A) + \supp (f_B) = \supp (h)$ under the following conditions:

\begin{itemize}
    \item $|\{\alpha_N^i: i\in \supp (f_A)\}| = ts + T$
    \item $|\{\alpha_N^i: i\in \supp (f_B)\}| = ds + T$
    \item $|\mathcal{I}|=|\{\alpha_N^{(i-1)s + (1-j)(ts+T)}: 1\leq i\leq t, 1\leq j \leq d\}|=td$
    \item  $\alpha_N^{z}\notin \mathcal{I}$, for any power $z$ of polynomial $h(x)$ associated to coefficients  $A_{i,k_1}B_{k_2,j}$ with $k_1\neq k_2$, and any  coefficient multiple of $R_k$ or $S_k$.
    % and $A_{i,j}S_k$, $R_kB_{i,j}$ and $R_{j_1}S_{j_2}$, for $i, j, k, j_1, j_2, k_1, k_2$ in their proper intervals with . %Coefficients of polyno associated to %$\mathcal{S}(i,j,k,k)\cap \mathcal{S}(i,j,k_1,k_2) = \emptyset$, for every $k, k_1, k_2\in\{1, 2, \ldots, s\}$, with $k_1\neq k_2$.
\end{itemize}

If $s=1$, then a $N=((d+1)(t+T)-1)$-th primitive root of unity satisfies the conditions. Otherwise, if $s>1$, a $N=(dst+dT+ts-1+T+1)$-th primitive root of unity will do so. 

Therefore, Remark \ref{rem:fieldsize} establishes the following condition on the size $q$ of the finite field $ ((d+1)(t+T)-1)\mid q$  if $s=1$, or $(dst+dT+ts-1+T+1)\mid q$, otherwise.

\noindent \textbf{Upload Phase:} The proposed scheme uses $N$ servers, as determined in the previous item. The user uploads $f_A(\alpha^i_N)$ and $f_B(\alpha^i_N)$ to each Server $i$.

\noindent \textbf{Download Phase:} The $i$-th server computes the matrix multiplication $f_A(\alpha^i_N)\cdot f_B(\alpha^i_N)$ and sends its result back to the user.

\noindent \textbf{User Decoding:} In Lemma~\ref{lem:decodability}, we show that the user is able to retrieve $\sum_{\ell=1}^sA_{i,\ell} B_{\ell,j}$ from 
$\{h(\alpha^i_N): i\in\{1, \dots, N\}\}$. Combining these, the user can decode \[\displaystyle AB =\begin{bmatrix}
\sum_{\ell=1}^s A_{i,\ell} B_{\ell,j}\end{bmatrix}_{\substack{1\leq i\leq t\\ 1\leq j\leq d}}.\]

\section{Proof of Theorem~\ref{theo:scheme}}
We break the proof into different Lemmas. We show that the proposed scheme is decodable, in Lemma~\ref{lem:decodability}, $T$-secure, in Lemma~\ref{lem:tsecure}, and characterize their performance, in Lemma~\ref{lem:comcosts}. These statements combined prove Theorem~\ref{theo:scheme}. 

\begin{lemma}\label{lem:decodability}
 Let $A \in \mathbb{F}_q^{a \times b}$, $B \in \mathbb{F}_q^{b \times c}$ be two matrices. Given positive integers $t,d,s$ and $T$ let $(t,d,s)$ be the partition parameters, meaning \[A = \begin{bmatrix}
A_{i,j} 
\end{bmatrix}_{\substack{1\leq i\leq t\\ 1\leq j\leq s}}, 
B = \begin{bmatrix}
B_{i,j} 
\end{bmatrix}_{\substack{1\leq i\leq s\\ 1\leq j\leq d}}.\]
Then, $\sum_{k=1}^sA_{i,k}B_{k,j}$ can be decoded using $N$ servers, for $1\leq i\leq t$ and $1\leq j\leq d$.
\end{lemma}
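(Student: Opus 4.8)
The plan is to show that each target matrix $\sum_{\ell=1}^s A_{i,\ell}B_{\ell,j}$ appears as the coefficient of a single, identifiable monomial in $h(x)=f_A(x)f_B(x)$, and that this coefficient can be extracted by the inverse discrete Fourier transform afforded by evaluation at powers of an $N$-th root of unity. First I would expand the product $h(x)=f_A(x)f_B(x)$ and organize its terms. The genuine (non-random) contributions come from $A_{i,k_1}x^{(i-1)s+k_1-1}$ times $B_{k_2,j}x^{(1-j)(ts+T)+(1-k_2)}$, giving a term $A_{i,k_1}B_{k_2,j}$ with exponent $(i-1)s+(k_1-k_2)+(1-j)(ts+T)$. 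The key algebraic observation to isolate is that when $k_1=k_2=\ell$ the exponent collapses to $(i-1)s+(1-j)(ts+T)$, which is \emph{independent of $\ell$}; hence all $s$ products $A_{i,\ell}B_{\ell,j}$ for $\ell=1,\dots,s$ land on the same monomial and their sum $\sum_{\ell=1}^s A_{i,\ell}B_{\ell,j}=M_{i,j}$ is exactly the coefficient $h_{(i-1)s+(1-j)(ts+T)}$. This is precisely the design feature encoded in the offsets $(i-1)s+j-1$ for $A$ and $(1-j)(ts+T)+(1-i)$ for $B$, and verifying this collapse is the heart of the argument.

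Next I would argue that these $td$ target exponents are distinct from one another and disjoint from the exponents carrying "interference." The distinctness of the set $\mathcal{I}=\{(i-1)s+(1-j)(ts+T): 1\le i\le t,\,1\le j\le d\}$ as a subset of $\mathbb{Z}$, and the fact that the exponents of the cross terms ($k_1\ne k_2$) together with every term involving an $R_k$ or an $S_k$ avoid $\mathcal{I}$, are guaranteed by the third and fourth bullet conditions in the "Choosing the Field and Evaluation Points" paragraph. I would invoke those conditions directly: because $\alpha_N$ is chosen as an $N$-th primitive root of unity meeting those conditions, the map $z\mapsto \alpha_N^z$ separates the target exponents from all interfering exponents even after reduction modulo $N$. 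Thus on the multiplicative group generated by $\alpha_N$, the target coefficient $M_{i,j}$ occupies a frequency bin occupied by no other term.

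Finally I would carry out the inverse-DFT extraction. Writing $h(\alpha_N^i)=f_A(\alpha_N^i)f_B(\alpha_N^i)$ for $i=1,\dots,N$, I use the orthogonality relation $\frac{1}{N}\sum_{i=1}^{N}\alpha_N^{i(r-z)}=\mathbbm{1}[r\equiv z \pmod N]$ (the same relation implicit in the decoding formulas of Section~\ref{sec2}), so that for a target exponent $z_{i,j}=(i-1)s+(1-j)(ts+T)$ the weighted sum $\frac{1}{N}\sum_{i'=1}^{N}\alpha_N^{-i' z_{i,j}}\,h(\alpha_N^{i'})$ picks out exactly the coefficient $M_{i,j}$; all other monomials of $h$ cancel because their exponents are $\not\equiv z_{i,j}\pmod N$ by the disjointness established above, and the random terms vanish from the relevant bins for the same reason. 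Here Remark~\ref{rem:fieldsize} is what guarantees $N$ is invertible in $\mathbb{F}_q$, so the normalization $1/N$ is well defined. Running this over all $1\le i\le t$, $1\le j\le d$ recovers every block of $AB$, proving decodability.

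The main obstacle I anticipate is the bookkeeping in the second paragraph: one must verify carefully that reduction modulo $N$ does not accidentally alias an interference exponent onto a target bin. The collapse of the diagonal terms (first paragraph) is a clean algebraic identity, but confirming that the negative-exponent random terms and the $k_1\ne k_2$ cross terms stay clear of $\mathcal{I}$ \emph{after} wrapping modulo $N$ is exactly what the carefully chosen values of $N$ in the two cases $s=1$ and $s>1$ are engineered to ensure, and checking that the stated $N$ genuinely realizes the four bullet conditions is the delicate step.
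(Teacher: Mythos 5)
Your skeleton matches the paper's proof exactly: the diagonal collapse (the exponent $(i-1)s+(1-j)(ts+T)$ being independent of $\ell$), the separation of target exponents from interference exponents, and the inverse-DFT extraction $\frac{1}{N}\sum_{i'=1}^{N}\alpha_N^{-i'z_{i,j}}h(\alpha_N^{i'})$ with Remark~\ref{rem:fieldsize} supplying invertibility of $N$. However, there is a genuine gap at precisely the step you yourself flag as ``the delicate step'': you never verify that the specific values $N=(d+1)(t+T)-1$ (for $s=1$) and $N=dst+dT+ts+T$ (for $s>1$) actually satisfy the separation conditions. Instead you ``invoke those conditions directly,'' which is circular: the bullet conditions in the scheme description are design \emph{requirements}, not established facts, and the entire content of the lemma is that the claimed number of servers realizes them. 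The paper's proof is almost wholly devoted to this verification. For $s=1$ it enumerates the three interference cases ($A_{i,j}S_k$, $R_kB_{i,j}$, and $R_{j_1}S_{j_2}$), computes each exponent difference explicitly, reduces it using $(d+1)(t+T)\equiv 1 \pmod N$, and shows by elementary inequalities that the result lies strictly between $-N$ and $0$ (or strictly between $0$ and $N$), hence is nonzero modulo $N$; for instance, in the $A_{i,j}S_k$ case one gets $-N=(-d-1)(t+T)+1<(i_1-i_2)-j(t+T)+k-1<0$. Without some computation of this kind, your second paragraph asserts the non-aliasing rather than proving it, and the argument does not close.

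To repair the proposal you would need to carry out that modular bookkeeping yourself: list the exponents of every monomial of $h$ (the diagonal terms, the cross terms $A_{i,k_1}B_{k_2,j}$ with $k_1\neq k_2$ which exist only when $s>1$, and all terms involving $R_k$, $S_k$, or $R_{k_1}S_{k_2}$), then show for the stated $N$ that each difference between a target exponent and a non-target exponent is bounded strictly inside $(-N,0)$ or $(0,N)$, so no wrap-around collision can occur. It is worth noting that the paper itself only writes this out for $s=1$ and dismisses $s>1$ as ``analogous,'' so the case your proof would most need to supply --- the cross terms with $k_1\neq k_2$ --- is also the one the paper leaves implicit; a complete write-up would improve on both.
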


\begin{proof}

Let $f_A(x)\in \mathbb{F}_q^{\frac{a}{d} \times \frac{b}{s}}[x,x^{-1}]$ and  $f_B(x)\in \mathbb{F}_q^{\frac{b}{s} \times \frac{c}{t}}[x,x^{-1}]$ be polynomials defined by
\begin{align*}f_A(x) = & \sum_{i=1}^t\sum_{j=1}^s A_{i,j}x^{(i-1)s+j-1} + \sum_{k=1}^{T} R_kx^{ts+k-1},\\ f_B(x) = & \sum_{i=1}^s\sum_{j=1}^d B_{i,j}x^{(1-j)(ts+T)+(1-i)}+\\&\hspace{3.5cm}+ \sum_{k=1}^{T} S_kx^{(-d)(ts+T)-k+1}.
\end{align*}
using the grid product partition for matrices $A$ and $B$, and uniformly distributed random $\mathbb{F}_q$-matrices $R_i,S_i$.
Therefore, $h(x)=f_A(x)\cdot f_B(x)$ is a polynomial where the coefficient of degree $(i-1)s + (1-j)(ts+T)$ is $\sum_{\ell=1}^s A_{i,\ell} B_{\ell,j}$, for $1\leq i\leq t$ and  $1\leq j \leq d$.

Let us suppose $s=1$. Consider $\alpha_N$ to be an $N=((d+1)(t+T)-1)$-th primitive root of unity. Since we want to retrieve $\sum_{\ell=1}^s A_{i,\ell} B_{\ell,j}$, for $1\leq i\leq t$ and  $1\leq j \leq d$, we need to assure that $\alpha_N^{(i-1) + (1-j)(t+T)}$ is not equal to any $\alpha_N^{z}$, for any degree $z$ of polynomial $h(x)$ associated to coefficients containing $A_{i,j}S_k$, $R_kB_{i,j}$ and $R_{j_1}S_{j_2}$, for $i, j, k, j_1, j_2, k_1, k_2$ in their proper intervals with $k_1\neq k_2$.

We explore all the cases here:

\begin{itemize}
    \item \underline{Case $A_{i,j}S_k$:} In this first case, we need to show that  $$\alpha_N^{(i_1-1) + (1-j)(t+T)} \neq \alpha_N^{(i_2-1)-d(t+T)-k+1}\mbox{,}$$ 
    
    for $1\leq i_1, i_2 \leq t$, $1\leq j\leq d$ and $1\leq k \leq T$, which is equivalent to
    \begin{multline}\label{case1}
    %(i-1) + (1-j)(t+T)\not\equiv (-d)(t+T)-k+1 \mod N
    %(i-1) + (1-j)(t+T)+(d)(ts+T)+k-1\not\equiv 0 \mod N
    (i_1- i_2) + (d+1-j)(t+T)+k-1\equiv \\
    \equiv (i_1-i_2)-j(t+T)+k-1\not\equiv 0 \mod N
    \end{multline}
    
    Since $-N=(-d-1)(t+T)+1<(i_1-i_2)-j(t+T)+k-1<0$, then Equation \ref{case1} holds true. 
    
  %  Therefore, $(i-1) + (1-j)(t+T)+(d)(ts+T)+k-1\not\equiv 0 \mod N$ and $-j(t+T)+k\not\equiv 0 \mod N$, which is always true since $0<-j(t+T)+k< N$.
    \item \underline{Case $R_kB_{i,j}$:} In this case, we want to ensure $$\alpha_N^{(i-1)+(1-j_1)(t+T)} \neq \alpha_N^{(t+k-1)+(1-j_2)(t+T)}\mbox{,}$$ 
    for $1\leq i \leq t$, $1\leq j_1, j_2\leq d$ and $1\leq k \leq T$, which is equivalent to 
    \begin{equation}\label{case2}
        t+k+(j_1-j_2)(t+T)-i\not \equiv 0\mod N %(d+1)(t+T) - 1
    \end{equation}
    \end{itemize}
% \begin{align*}
%&(i-1)+(1-j_1)(t+T) \not \equiv (t+k-1)+(1-j_2)(t+T) \mod N    \\
% &t+k+(j_1-j_2)(t+T)-i\not \equiv 0\mod N\\
%    \iff  &t+k+(j_1-j_2)(t+T)-i\not \equiv 0\mod (d+1)(t+T) - 1\\
%    \end{align*}
    Since $0<t+k+(j_1-j_2)(t+T)-i\leq t+T+(j_1-j_2)(t+T)-i \leq d (t+T)-i < (d+1)(t+T)-1=N$, then Equation \ref{case2} holds true. 
\begin{itemize}
\item \underline{Case  $R_{j_1}S_{j_2}$:} For the last case, we need to show that 

$$\alpha_N^{(i-1) + (1-j)(t+T)} \neq \alpha_N^{(k_1-k_2)+t-d(t+T)} \mbox{,}$$ 

    for $1\leq i \leq t$, $1\leq j\leq d$ and $1\leq k_1, k_2 \leq T$, which is equivalent to
    \begin{multline}\label{case3}
    %(i-1) + (1-j)(t+T)\not\equiv (-d)(t+T)-k+1 \mod N
    %(i-1) + (1-j)(t+T)+(d)(ts+T)+k-1\not\equiv 0 \mod N
    i- 1 + (d+1-j)(t+T)+-t+k_2-k_1\equiv \\
    \equiv i-j(t+T)-t+k_2-k_1\not\equiv 0 \mod N
    \end{multline}
Since $-N=(-d-1)(t+T)+1<i-j(t+T)-t+k_2-k_1<0$, then Equation \ref{case3} holds true. 
\end{itemize}

For the case where $s>1$, we also need to consider all the cases $A_{i,k_1}B_{k_2,j}$, $A_{i,j}S_k$, $R_kB_{i,j}$ and $R_{j_1}S_{j_2}$, which follows analogously to the case $s=1$. 

Last steps assure $\alpha_N^{z}\notin \{\alpha_N^{(i-1)s + (1-j)(ts+T)}: 1\leq i\leq t, 1\leq j \leq d\}$, for any degree $z$ of polynomial $h(x)$ associated to coefficients  $A_{i,k_1}B_{k_2,j}$ with $k_1\neq k_2$, and any  coefficient multiple of $R_k$ or $S_k$.

Using the fact that 
\begin{equation}\label{prop:rootofunity}
     \sum_{i=1}^N(\alpha_N^i)^s= 
\begin{cases}
    0,              &\text{if }N\nmid  s \\
    N,& \text{if } N \mid s
\end{cases}
,\end{equation} 

for any $N$-th primitive root of unity, we ensure 
\[\sum_{\ell=1}^sA_{i,\ell} B_{\ell,j} = \frac{1}{N}\sum_{i=1}^{N}(\alpha_N^i)^{\delta_{i,j}}f_A(\alpha_{N}^i)\cdot f_B(\alpha_{N}^i)\mbox{,}\]
where $\delta_{i,j}=-(i-1)s-(1-j)(ts+T)$.

Decodability is then obtained by repeating this process for every $1 \leq i\leq t$ and $1 \leq j\leq d$:

\[\displaystyle AB =\begin{bmatrix}
\sum_{\ell=1}^s A_{i,\ell} B_{\ell,j}\end{bmatrix}_{\substack{1\leq i\leq t\\ 1\leq j\leq d}}.
\]
\end{proof}

Next, we show that the proposed scheme is $T$-secure.
\begin{lemma}\label{lem:tsecure}
The proposed scheme is $T$-secure.
\end{lemma}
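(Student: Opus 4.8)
The plan is to show that any collection of $T$ colluding servers learns nothing about $A$ or $B$, i.e.\ that the evaluations $\{(f_A(\alpha_N^i), f_B(\alpha_N^i))\}_{i\in\mathcal{T}}$ for any index set $\mathcal{T}$ with $|\mathcal{T}|=T$ are statistically independent of the input matrices. First I would observe that $A$ and $B$ are encoded by \emph{separate} polynomials $f_A$ and $f_B$, and that the randomness $R_1,\dots,R_T$ appearing in $f_A$ is independent of $S_1,\dots,S_T$ appearing in $f_B$. Hence I would argue that it suffices to prove security for each polynomial separately: showing that $\{f_A(\alpha_N^i)\}_{i\in\mathcal{T}}$ reveals nothing about $A$, and symmetrically that $\{f_B(\alpha_N^i)\}_{i\in\mathcal{T}}$ reveals nothing about $B$. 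I would use the standard mutual-information formulation of $T$-security, namely $I(A,B;\,\{f_A(\alpha_N^i),f_B(\alpha_N^i)\}_{i\in\mathcal{T}})=0$, and reduce it via the independence of the two random pads to the two one-sided statements.

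Next, for the $f_A$ side I would collect the $T$ evaluations into a single matrix equation. Writing the $T$ padding terms $R_1 x^{ts},\dots,R_T x^{ts+T-1}$, each evaluation $f_A(\alpha_N^i)$ equals a deterministic (data-dependent) term plus $\sum_{k=1}^{T} R_k (\alpha_N^i)^{ts+k-1}$. Stacking over $i\in\mathcal{T}$, the random contribution is governed by the coefficient matrix $V=\big[(\alpha_N^i)^{ts+k-1}\big]_{i\in\mathcal{T},\,1\le k\le T}$. The key step is to recognize that, up to the nonzero diagonal scaling $\mathrm{diag}((\alpha_N^i)^{ts})$, this is a $T\times T$ Vandermonde matrix in the distinct nodes $\{\alpha_N^i : i\in\mathcal{T}\}$; since $\alpha_N$ is a primitive $N$-th root of unity and the $T$ exponents are distinct modulo $N$ (here is where I would invoke $N>T$ and the distinctness of the $\alpha_N^i$), $V$ is invertible. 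Consequently $\sum_k R_k (\alpha_N^i)^{ts+k-1}$ is a uniformly random and invertible linear image of the i.i.d.\ uniform pad $(R_1,\dots,R_T)$, so the vector of evaluations is itself uniform over its ambient space and independent of $A$. The symmetric argument handles $f_B$ using the pad $(S_1,\dots,S_T)$ and the Vandermonde structure of the exponents $(-d)(ts+T)-k+1$.

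I would then combine the two one-sided results: because $(R_k)$ and $(S_k)$ are independent and each one-sided evaluation tuple is uniform and input-independent, the joint tuple $\{(f_A(\alpha_N^i),f_B(\alpha_N^i))\}_{i\in\mathcal{T}}$ has a distribution that does not depend on $(A,B)$, giving $I(A,B;\cdot)=0$ and hence $T$-security. The main obstacle I anticipate is the invertibility of the relevant $T\times T$ submatrices: I must verify that the random-coefficient exponents restricted to any $T$-subset of evaluation points yield a full-rank matrix, which hinges on the padding exponents being distinct and the nodes $\alpha_N^i$ being distinct $N$-th roots of unity with $T<N$. A secondary subtlety is justifying cleanly that conditioning on the data part does not disturb the uniformity of the pad contribution; I would handle this by noting the affine structure (data term is a fixed shift) so that invertibility of the linear random part is exactly what is needed.
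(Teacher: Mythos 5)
Your proposal is correct and follows essentially the same route as the paper: reduce $T$-security to the one-sided statements $I(A;\{f_A(\alpha_N^i)\}_{i\in\mathcal{T}})=0$ and $I(B;\{f_B(\alpha_N^i)\}_{i\in\mathcal{T}})=0$ via independence of the pads, then show the random contribution $\sum_k R_k(\alpha_N^i)^{ts+k-1}$ is an invertible (Vandermonde, up to diagonal scaling) linear image of the uniform pad at the $T$ distinct evaluation points, so the evaluations are uniform and carry no information. The only cosmetic difference is that the paper runs this through an explicit entropy chain ($H$ upper bound minus conditional entropy equal to $\frac{Tab}{st}\log q$), whereas you argue the uniformity and input-independence of the distribution directly, and you are somewhat more explicit about recombining the two one-sided statements into the joint one.
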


% \begin{proof}
% In extended version. Follows from standard arguments.
% \end{proof}

\begin{proof}
Since $f_A(x)$ is independent from $B$ and $f_B(x)$ is independent from $A$, proving $T$-security is equivalent to showing that $I(A;f_A(\alpha_{i_1}), \ldots, f_A(\alpha_{i_T}))=I(B;f_B(\alpha_{i_1}), \ldots, f_B(\alpha_{i_T}))=0$. We prove the claim for $f_A(x)$, since the proof for $f_B(x)$ is analogous.

As defined in Equation \ref{eq:encfunction}, $f_A(x)$ is expressed as
\begin{align*}
f_A(x) = & \sum_{i=1}^t\sum_{j=1}^s A_{i,j}x^{(i-1)s+j-1} + \sum_{k=1}^{T} R_kx^{ts+k-1}.
\end{align*}
Then,
\begin{align*}
&I(A;f_A(\alpha_N^{i_1}), \ldots, f_A(\alpha_N^{i_T}))\\
=&H(f_A(\alpha_N^{i_1}), \ldots, f_A(\alpha_N^{i_T})) - H(f_A(\alpha_N^{i_1}), \ldots, f_A(\alpha_N^{i_T})|A)\\
%&\leq H(f_A(\alpha_N^{i_1})) + \cdots + H(f_A(\alpha_N^{i_T})) - H(f_A(\alpha_N^{i_1}), \ldots, f_A(\alpha_N^{i_T})|A)\\
%& = H(f_A(\alpha_N^{i_1})) + \cdots + H(f_A(\alpha_N^{i_T})) - H(f_A^{(T)}(\alpha_N^{i_1}), \ldots, f_A^{(T)}(\alpha_N^{i_T})),\\
\le & \sum_{j \in \mathcal{T}}H(f_A(\alpha_N^{j})) - H(f_A(\alpha_N^{i_1}), \ldots, f_A(\alpha_{i_T})|A)\\
=& \sum_{j \in \mathcal{T}}H(f_A(\alpha_N^{j})) - H(f_A^{(T)}(\alpha_N^{i_1}), \ldots, f_A^{(T)}(\alpha_N^{i_T})),\\
=& \frac{Tab}{st}\log(q) - H(f_A^{(T)}(\alpha_N^{i_1}), \ldots, f_A^{(T)}(\alpha_N^{i_T}))\\
\end{align*}
where $f_A^{(T)}(x) = \sum_{k=1}^{T} R_kx^{ts+k-1}$.

\bigskip

Since $\alpha_N$ is an $N$-th primitive root of unity, the evaluation points $\{\alpha_N^i: i\in \mathcal{T}\}$ are all different, and the following matrix has full rank.
\small{
\[
\left(\begin{matrix}
R_1(\alpha_N^{i_1ts})&\!R_1(\alpha_N^{i_2ts})&\cdots& R_1(\alpha_N^{i_Tts})\\
R_2(\alpha_N^{i_1(ts+1)})&\!R_2(\alpha_N^{i_2(ts+1)})&\cdots& R_2(\alpha_N^{i_T(ts+1)})\\
\vdots &\! \vdots & \ddots & \vdots\\
R_T(\alpha_N^{i_1(ts+T-1)})&\!R_T(\alpha_N^{i_2(ts+T-1)})&\cdots& R_T(\alpha_N^{i_T(ts+T-1)})\\
\end{matrix}\right)
\]}
This is because the set of $R_i's$ are linearly independent and the evaluation points are different which implies that $f_A^{(T)}(\alpha_N^{i_j})$'s are uniformly distributed in the space of the matrices $M_{\frac{a}{t}\times \frac{b}{s}}(\mathbb{F}_{q})$. Thus, 
$H(f_A^{(T)}(\alpha_N^{i_1}), \ldots, f_A^{(T)}(\alpha_N^{i_T})) = \frac{Tab}{st}\log(q)$,
and therefore, $I(A;f_A(\alpha_N^{i_1}), \ldots, f_A(\alpha_N^{i_T}))= 0$.
\end{proof}

We now characterize the total communication rate.

\begin{lemma}\label{lem:comcosts}
Proposed Scheme have total communication rate
\begin{align*} 
\mathcal{R} = \left( N\left(\frac{b}{cts}+\frac{b}{asd}+ \frac{1}{td}\right) \right)^{-1} . 
\end{align*}
\end{lemma}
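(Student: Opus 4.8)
The plan is to compute the total communication rate directly from its definition as the ratio of useful information downloaded to the total data exchanged. Recall that $A \in \mathbb{F}_q^{a\times b}$ and $B\in\mathbb{F}_q^{b\times c}$, so the desired product $AB \in \mathbb{F}_q^{a\times c}$ carries $ac$ entries over $\mathbb{F}_q$. The communication rate $\mathcal{R}$ is defined as the number of $\mathbb{F}_q$-symbols in the desired output divided by the total number of $\mathbb{F}_q$-symbols transmitted (upload plus download) across all $N$ servers. So first I would compute the upload and download costs separately in terms of the block dimensions, and then assemble the ratio.

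First I would record the dimensions of the encoded blocks. Under the grid partition, each block $A_{i,j}$ lies in $\mathbb{F}_q^{\frac{a}{t}\times\frac{b}{s}}$ and each $B_{i,j}$ in $\mathbb{F}_q^{\frac{b}{s}\times\frac{c}{t}}$, so the evaluations $f_A(\alpha_N^i)$ and $f_B(\alpha_N^i)$ have $\frac{ab}{ts}$ and $\frac{bc}{st}$ entries respectively. Uploading these to each of the $N$ servers gives a total upload of $N\left(\frac{ab}{ts}+\frac{bc}{st}\right)$ symbols. For the download, each server returns the product $f_A(\alpha_N^i)\cdot f_B(\alpha_N^i)\in\mathbb{F}_q^{\frac{a}{t}\times\frac{c}{t}}$, contributing $\frac{ac}{t^2}$ symbols per server, for a total download of $N\frac{ac}{t^2}$.

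Next I would normalize by the useful output $ac$ to express the rate. Summing upload and download and dividing the useful quantity $ac$ by the total traffic yields
\begin{align*}
\mathcal{R}
= \frac{ac}{\,N\!\left(\dfrac{ab}{ts}+\dfrac{bc}{st}+\dfrac{ac}{t^2}\right)\!}
= \left( N\!\left(\frac{b}{cts}+\frac{b}{asd}+\frac{1}{td}\right)\right)^{-1},
\end{align*}
where the last equality follows by dividing numerator and denominator by $ac$ and simplifying each term. The only subtlety I anticipate is reconciling the stated form of the denominator with the naive block dimensions: the claimed rate has factors $\frac{b}{asd}$ and $\frac{1}{td}$ involving $d$ rather than a second factor of $t$, which is what a literal reading of the block sizes $\frac{a}{t},\frac{c}{t}$ would produce. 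This indicates the download block for $B$ should be taken as $\mathbb{F}_q^{\frac{b}{s}\times\frac{c}{d}}$ and the output block as $\mathbb{F}_q^{\frac{a}{t}\times\frac{c}{d}}$, so that the column index of $B$ (and hence of $AB$) ranges over the $d$ partition, not $t$.

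The main obstacle, therefore, will be carefully tracking which partition parameter governs each matrix dimension: the row partition of $A$ uses $t$, the inner (contracted) dimension uses $s$, and the column partition of $B$ uses $d$. Once the block dimensions are fixed consistently as $A_{i,j}\in\mathbb{F}_q^{\frac{a}{t}\times\frac{b}{s}}$, $B_{i,j}\in\mathbb{F}_q^{\frac{b}{s}\times\frac{c}{d}}$, and each server's returned product in $\mathbb{F}_q^{\frac{a}{t}\times\frac{c}{d}}$, the three terms $\frac{b}{cts}$, $\frac{b}{asd}$, and $\frac{1}{td}$ emerge respectively from the $A$-upload, the $B$-upload, and the download, each divided through by the useful output size. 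The rest is routine algebraic bookkeeping, and no deep argument is needed beyond the dimension accounting combined with the value of $N$ already established in the decodability lemma.
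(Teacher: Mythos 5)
Your proposal is correct, and there is nothing in the paper to compare it against: the paper states this lemma without proof, treating the communication-rate computation as routine bookkeeping. Your dimension-counting argument is exactly the intended one, and you correctly diagnosed the inconsistency in the paper's stated block dimensions: since $B$ is partitioned with $1\leq i\leq s$, $1\leq j\leq d$, each block $B_{i,j}$ must lie in $\mathbb{F}_q^{\frac{b}{s}\times\frac{c}{d}}$ (not $\mathbb{F}_q^{\frac{b}{s}\times\frac{c}{t}}$ as written in the paper), so each server's answer lies in $\mathbb{F}_q^{\frac{a}{t}\times\frac{c}{d}}$, and this is the only reading under which the upload terms $\frac{ab}{ts}$, $\frac{bc}{sd}$ and the download term $\frac{ac}{td}$, normalized by the output size $ac$, produce the claimed rate $\left(N\left(\frac{b}{cts}+\frac{b}{asd}+\frac{1}{td}\right)\right)^{-1}$. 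One blemish to fix in a final write-up: your displayed equation equates $\frac{ac}{N\left(\frac{ab}{ts}+\frac{bc}{st}+\frac{ac}{t^2}\right)}$ with the claimed rate, and these two expressions are not equal unless $d=t$; the correction you make afterwards (replacing $\frac{bc}{st}$ by $\frac{bc}{sd}$ and $\frac{ac}{t^2}$ by $\frac{ac}{td}$) should be installed from the start, so that every displayed identity in the proof is literally true rather than repaired after the fact.
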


% \begin{proof}
% All costs will be computed in $\mathbb{F}_{q_0}$-symbols.
% The upload and download costs can be directly calculated as
% \begin{align*}
% &\mathcal{U} = N_L\left(\frac{ab}{L}+\frac{bc}{L}\right) \prod_{j=1}^{L} p_j, \\
% &\mathcal{D} = ac\sum_{i=1}^{L} N_i \prod_{j \in [L] \setminus \{i\}} p_j
% .
% \end{align*}

% Since the matrix $AB$ has $\mathcal{S} = ac\prod_{j=1}^{L} p_j$ symbols of $\mathbb{F}_{q_0}$, it follows that the total communication rate is given by $\frac{\mathcal{S}}{\mathcal{U}+\mathcal{D}}$, which simplify to the presented formula.
% \end{proof}

%\section{Computational Complexity: comparison to other existing methods}

%\section{Decoding Complexity at the User}

%The computation complexity of the decoding complexity at the user 
%$O((P - 1)(log(P - 1))2T D/(td)))$ for Simeone \rob{need to compute the complexity to decode}.

%\section{Choosing the right parameters}

\section*{Acknowledgment}

Felice Manganiello is supported  by the NSF under grants DMS-1547399.

% The preferred spelling of the word ``acknowledgment'' in America is without 
% an ``e'' after the ``g''. Avoid the stilted expression ``one of us (R. B. 
% G.) thanks $\ldots$''. Instead, try ``R. B. G. thanks$\ldots$''. Put sponsor 
% acknowledgments in the unnumbered footnote on the first page.

\bibliographystyle{IEEEtran}
\bibliography{references.bib}

\end{document}